\definecolor{green1}{RGB}{81,107,62}
\definecolor{blue1}{RGB}{0,32,96}
\tikzstyle{block} = [rectangle, draw, text centered, rounded corners, minimum height=2em]
\newtheorem{thm}{Theorem}
\newtheorem{defn}{Definition}
\newtheorem{lemma}{Lemma}
\newtheorem{construction}{Construction}
\newcommand{\ve}[1]{\ensuremath{\boldsymbol{#1}}}
\newcolumntype{C}[1]{>{\centering\let\newline\\\arraybackslash\hspace{0pt}}m{#1}}
\newcommand{\sub}{\ensuremath{\mathbb{S}}}
\begin{document}
\bstctlcite{IEEEexample:BSTcontrol}
\title{{\textbf{Anchor-Based Correction of Substitutions\\ in Indexed Sets}}\vspace{-1ex}}


%
\author{
	\IEEEauthorblockN{
		\textbf{Andreas Lenz}\IEEEauthorrefmark{1},
		\textbf{Paul H. Siegel}\IEEEauthorrefmark{2},
		\textbf{Antonia Wachter-Zeh}\IEEEauthorrefmark{1}, and
		\textbf{Eitan Yaakobi}\IEEEauthorrefmark{3}
	}
	
	\IEEEauthorblockA{
		\IEEEauthorrefmark{1}Institute for Communications Engineering, Technical University of Munich, Germany
	}
	\IEEEauthorblockA{
		\IEEEauthorrefmark{2}Department of Electrical and Computer Engineering, CMRR, University of California, San Diego, California
	}
	\IEEEauthorblockA{
		\IEEEauthorrefmark{3}Computer Science Department, Technion -- Israel Institute of Technology, Haifa, Israel
	}
	\textbf{Emails}: andreas.lenz@mytum.de, psiegel@ucsd.edu, antonia.wachter-zeh@tum.de, yaakobi@cs.technion.ac.il\vspace{-3ex}
	
	\thanks{This work was done in part while A. Lenz was visiting the computer science faculty of Technion -- Israel Institute of Technology, Israel. This work was supported by the Institute for Advanced Study (IAS), Technische Universit\"{a}t M\"{u}nchen (TUM), with funds from the German Excellence Initiative and the European Union's Seventh Framework Program (FP7) under grant agreement no.~291763. This work was also supported by NSF Grant CCF-BSF-1619053 and by the United States-Israel BSF grant 2015816.}
	
}


\maketitle

\begin{abstract}
	Motivated by DNA-based data storage, we investigate a system where digital information is stored in an unordered set of several vectors over a finite alphabet. Each vector begins with a unique index that represents its position in the whole data set and does not contain data. This paper deals with the design of error-correcting codes for such indexed sets in the presence of substitution errors. We propose a construction that efficiently deals with the challenges that arise when designing codes for unordered sets. Using a novel mechanism, called \emph{anchoring}, we show that it is possible to combat the ordering loss of sequences with only a small amount of redundancy, which allows to use standard coding techniques, such as tensor-product codes to correct errors within the sequences. We finally derive upper and lower bounds on the achievable redundancy of codes within the considered channel model and verify that our construction yields a redundancy that is close to the best possible achievable one. Our results surprisingly indicate that it requires less redundancy to correct errors in the indices than in the data part of vectors.
\end{abstract}


%
\IEEEpeerreviewmaketitle

\section{Introduction}
\begin{figure*}
	\newcommand{\xwidth}{0cm}
	\newcommand{\indexwidth}{1.5cm}
	\newcommand{\datawidth}{2.84cm}
	\newcommand{\mht}{0.6cm}
	\tikzstyle{block2} = [rectangle, draw, minimum height=\mht]
	\begin{tikzpicture}

	\node[block2 ,minimum width=\indexwidth] (index) {$1$};
	\node[left= 0 of index,minimum width=\xwidth] {$\ve{x}_1$};
	\node[block2, right= 0pt of index, minimum width=\datawidth] (data) {$\ve{u}_1$};
	
	\node[block2, below= .1cm of index,minimum width=\indexwidth] (index2) {$2$};
	\node[left= 0 of index2,minimum width=\xwidth] {$\ve{x}_2$};
	\node[block2, right= 0pt of index2, minimum width=\datawidth] (data2) {$\ve{u}_2$};

	\node[below = 0pt of index2,minimum height=\mht] (dots) {$\vdots$};

	\node[block2, below= 0pt of dots,minimum width=\indexwidth] (indexM) {$M$};
	\node[left= 0 of indexM,minimum width=\xwidth] {$\ve{x}_M$};
	\node[block2, right= 0pt of indexM, minimum width=\datawidth] (dataM) {$\ve{u}_M$};

	\node at (0,.65) {Index};
	\node at (\indexwidth/2+\datawidth/2,.65) {Data};

	\fill[gray, opacity=0.2, rounded corners] (-\indexwidth/2, .95cm) rectangle ($(indexM) + (0.5*\indexwidth, -0.55*\mht)$);
	
	\fill[lightgray, opacity=0.2, rounded corners] ($(data) + (-\datawidth/2, .95cm)$) rectangle ($(dataM) + (0.5*\datawidth, -0.55*\mht)$);

	\node[right= 1.5cm of data, minimum width=\xwidth] (Bx) {$\ve{x}_1'$};
	\node[block2 , right= 0cm of Bx, minimum width=\indexwidth] (Bindex) {$4$};
	\node[block2, right= 0pt of Bindex, minimum width=\datawidth] (Bdata) {$\ve{u}_1'$};
	
	\node[block2, below= .1cm of Bindex,minimum width=\indexwidth] (Bindex2) {$2$};
	\node[left= 0 of Bindex2,minimum width=\xwidth] (Bx2) {$\ve{x}_2$};
	\node[block2, right= 0pt of Bindex2, minimum width=\datawidth] (Bdata2) {$\ve{u}_2$};

	\node[below = 0pt of Bindex2,minimum height=\mht] (Bdots) {$\vdots$};

	\node[block2, below= 0pt of Bdots,minimum width=\indexwidth] (BindexM) {$7$};
	\node[left= 0 of BindexM,minimum width=\xwidth] (BxM) {$\ve{x}_M'$};
	\node[block2, right= 0pt of BindexM, minimum width=\datawidth] (BdataM) {$\ve{u}_M'$};
	
	\draw[->] ($(data2.east) + (0.25,-0.25)$) -- node[above]  {Perturb} ($(data2.east) + (1.35,-0.25)$);

	\node[right= 1.5cm of Bdata, minimum width=\xwidth] (Cx) {$\ve{x}_2$};
	\node[block2, right= 0cm of Cx, minimum width=\indexwidth] (Cindex) {$2$};
	\node[block2, right= 0pt of Cindex, minimum width=\datawidth] (Cdata) {$\ve{u}_2$};
	
	\node[block2, below= .1cm of Cindex,minimum width=\indexwidth] (Cindex2) {$1$};
	\node[left= 0 of Cindex2,minimum width=\xwidth] (Cx2) {$\ve{x}_5'$};
	\node[block2, right= 0pt of Cindex2, minimum width=\datawidth] (Cdata2) {$\ve{u}_5'$};

	\node[below = 0pt of Cindex2,minimum height=\mht] (Cdots) {$\vdots$};

	\node[block2, below= 0pt of Cdots,minimum width=\indexwidth] (CindexM) {$7$};
	\node[left= 0 of CindexM,minimum width=\xwidth] (CxM) {$\ve{x}_M'$};
	\node[block2, right= 0pt of CindexM, minimum width=\datawidth] (CdataM) {$\ve{u}_M'$};
	
	\draw[->] ($(Bdata2.east) + (0.25,0)$) to[out=0,in=180] ($(Cx.west) + (-0.15,0)$);
	\draw[->] ($(Bdata.east) + (0.25,0)$) to[out=0,in=180] ($(Cdots.west) + (-1.4,0)$);
	\draw[->] ($(Bdots.east) + (3.69,0)$) to[out=0,in=180] ($(Cx2.west) + (-0.15,0)$);
	\draw[->] ($(BdataM.east) + (0.25,0)$) to[out=0,in=180] ($(CxM.west) + (-0.03,0)$);
	
	\node at (11cm,.65) {Permute};
	
	\end{tikzpicture}
	\caption{Channel model for information storage in indexed sets. First, some sequences $\ve{x}_i$ are perturbed by substitution errors, resulting in $\ve{x}_i' = \ve{x}_i + \ve{e}_i$. Afterwards, the sequences can be permuted arbitrarily and hence all inherent information about their ordering is lost. Since the indices can be erroneous, too, it is not necessarily directly possible to reconstruct their ordering.}
	\label{fig:channel}
	\vspace{-.4cm}
\end{figure*}
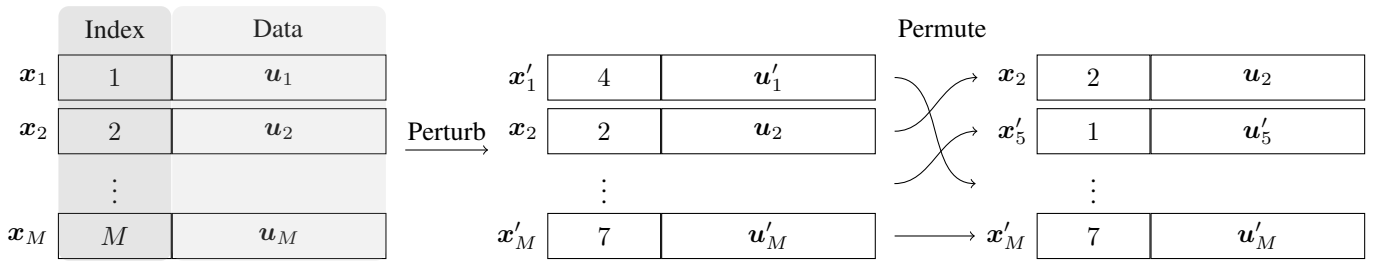
Consider a system where digital information is embodied in an unordered set of vectors and each vector holds a share of the whole data set. To combat the unordered nature of the data storage, such systems almost exclusively rely on indices, which are prepended to each vector and denote the position of that vector in the data set. An important example for a modern communication system of such a type is internet routing, where data is split into packets and transmitted over a network. Since packets can have different propagation times over the network, they might be received in a different order and hence the ordering of the packets is lost. Another important example for such a system is deoxyribonucleic acid (DNA)-based data storage, which is the main focus of this paper.

Data storage in DNA is a novel technology that, due to recent advancements in biochemical mechanisms of synthesizing and sequencing DNA molecules, has advanced to be a highly competitive candidate for long-term archival storage of digital data. This is since DNA-based storage has several important features that stand out with respect to conventional digital data storage systems, such as tapes and hard disk drives. These include outstandingly high data densities and long-term robustness. Due to its chemical structure, from a coding theoretic point of view, DNA can be seen as a vector over symbols \{A,C,G,T\}, which abbreviate the four nucleotides adenine (A), cytosine (C), guanine (G), and thymine (T). Single DNA strands can be synthesized chemically and modern DNA synthesizers can concatenate the four DNA nucleotides to form almost any possible sequence. This process enables the storage of digital data in DNA. The data can be read back with common DNA sequencers, while the most popular ones use DNA polymerase enzymes.

Using DNA as a storage medium for digital data was envisioned by Feynman in his famous speech ``There's plenty of room at the bottom" and also by Baum \cite{Bau95}. It took several decades until first experiments of Church et al.~\cite{CGK12} and later Goldman et al. ~\cite{GBCDLSB13} demonstrated the viability of \emph{in vitro} DNA storage on a large scale. In the next years, many experiments followed, including Grass et al. ~\cite{GHPPS15} who successfully employed error-correcting codes to recover the data. Since then, several more groups have elaborated the methods, storing ever larger amounts of data. For example, Erlich and Zielinski~\cite{EZ17} stored 2.11MB of data in DNA, Blawat et al.~\cite{BGHCTIPC16} recovered a data archive of 22MB, and Organick et al.~\cite{Oetal17} stored 200MB of digital information. Yazdi et al.~\cite{YTMZM15} developed a method that offers random read access and rewritable storage using constrained codes. On the other hand, coding theoretic aspects of DNA storage systems have received significant attention recently. The work of \cite{KPM16} discusses error-correcting codes for the DNA sequencing channel, where a possibly erroneous collection of substrings of the original strand is obtained. Codes over unordered sets of sequences, where sequences are affected by a certain number of point errors, such as insertions, deletions and substitutions, have been discussed in \cite{LSWY18,SRB18,SC2018}. In \cite{SRB18}, codes and bounds for a given number of substitutions have been proposed, which require a redundancy that is both logarithmic in the number of sequences and the length of the sequences. Based on a slight adaptation of the model in \cite{LSWY18}, the sequence-subset distance has been introduced and analyzed in \cite{SC2018} and Singleton-like and Plotkin-like code size upper bounds have been derived. In contrast, \cite{KT18} proposes codes for errors that affect whole strands in a storage system that stores multiset of sequences. Recently, codes that can be equipped as primer addresses have been proposed in \cite{YKGM18,CKW19}. A comprehensive survey for DNA-based storage can be found in \cite{YKGMZM15}.

From an information theoretic point of view DNA is fundamentally different than other storage media due to the fact that all information about the ordering of the DNA strands is lost during synthesis. One efficient and practical way to circumvent this limitation is to prepend an index to each strand that denotes the position of the strand in the archive. However, due to errors during synthesis or sequencing, these indices might be erroneous when reading the archive. A naive solution to combat these errors is to protect each index by an error-correcting code. Such an approach however already incurs a redundancy that grows linearly with the number of strands, which is suboptimal, especially for the practically important case, when not all sequences contain errors. In this paper we will analyze the approach of indexing sequences in the presence of errors inside the strands. We propose constructions that efficiently cope with these errors and only have a redundancy that is logarithmic in both the number and length of sequences. Note that the employment of indices is not a necessity and the more general setup of storing an arbitrary set of sequences has been analyzed in \cite{LSWY18}. However, the discussion of indexed-based schemes is practically important due to its simplicity. In this work we study only substitution errors, while insertion and deletion errors are deferred for future work. Also, we present our results for the binary case, while the extension to non-binary alphabets is straightforward.
\section{Channel Model} \label{sec:channel:model}
In this work we study a system where user data is stored in an \emph{indexed set} $\mathcal{S} = \{\ve{x}_1,\dots, \ve{x}_M \}$ of $M$ unordered vectors $\ve{x}_i  \in \Sigma_2^L$, where $i\in[M]\triangleq \{1,2,\dots,M\}$ and $\Sigma_2 = \{0,1\}$. The vectors are also called \emph{sequences} or \emph{strands} in reference to the DNA-based storage system. Hereby, each vector $\ve{x}_i$ has the same length $L$. Throughout the paper, we use that $M = 2^{\beta L}$ for some $0<\beta<1$ such that $\beta L \in \mathbb{N}$ is an integer. Mathematically, an indexed set is characterized as
\begin{equation*}
\mathcal{S} = \{ (\ve{I}(1), \ve{u}_1), (\ve{I}(2), \ve{u}_2), \dots, (\ve{I}(M), \ve{u}_M) \} \subseteq \Sigma_2^L,
\end{equation*}
with sequences $\ve{x}_i = (\ve{I}(i), \ve{u}_i) \in \Sigma_2^L$. Each sequence hereby consists of two parts.  It begins with a prefix $\ve{I}(i) \in \Sigma_2^{\log M}$, also referred to as \emph{index}, of length $ \log M$. This prefix is a unique binary representation of the index $i$ and designates the position of this specific sequence in the data set $\mathcal{S}$. Note that in general it is possible to use any bijective map $\ve{I}(i) : \{1,\dots,M\} \mapsto \Sigma_2^{\log M}$ as index, however in practice this map is usually realized by a standard decimal to binary conversion. The second part of each sequence, $\ve{u}_i \in \Sigma_2^{L-\log M}$, will be referred to as the \emph{data} part of a sequence and can be filled arbitrarily by either user information or redundancy from an error-correcting code, as illustrated later. For convenience, we will abbreviate $L_M\triangleq L-\log M$ throughout the paper. The set of all indexed data sets is
$$
	\mathcal{I}_M^L = \hspace{-.05cm} \left\{  \begin{array}{l}\mathcal{S} =\{(\ve{I}(1), \ve{u}_1), (\ve{I}(2), \ve{u}_2), \dots, (\ve{I}(M), \ve{u}_M) \} : \\
	\ve{u}_i \in \Sigma_2^{L_M} \; \forall\, i=1,\dots,M
	\end{array} \hspace{-.05cm} \right\},
$$
and their total number is $|\mathcal{I}_M^L| = 2^{ML_M}$. Therefore, $\mathcal{I}_M^L$ denotes all feasible channel inputs of the channel, when using indexed sets. The stored set can be corrupted by substitution errors, caused by, e.g., synthesis or sequencing errors and we model the errors by a channel that takes an indexed set $\mathcal{S} \in \mathcal{I}_M^L$ as input and outputs an erroneous outcome of this set based on the following procedure as visualized in Fig.~\ref{fig:channel}. When an indexed data set $\mathcal{S} = \{ \ve{x}_1,\dots,\ve{x}_M \}\in \mathcal{I}_M^L$ has been stored, $M-t$ strands are read correctly and $t$ strands are read in error. These sequences result from clustering and reconstructing a large number of sequences, which has been illustrated and discussed in \cite{Oetal17,LSWY18}. Denote by $\mathcal{F} = \{f_1,f_2,\dots,f_t\} \subseteq [M]$ with $1\leq f_1 < f_2 < \dots<f_t\leq M$ the ordered indices of the sequences that are received in error and $\ve{e}_{f_1}, \dots, \ve{e}_{f_t} \in \Sigma_2^L$ the corresponding error patterns. The index $\ve{I}(i)$ of each erroneous sequence $\ve{x}_i$, $i \in \mathcal{F}$ is affected by at most $\epsilon_1$ substitution errors and the data part $\ve{u}_i$ is affected by at most $\epsilon_2$ substitutions. Therefore, each error vector is composed of two parts $\ve{e}_{f_i} = (\ve{e}^I_{f_i},\ve{e}^D_{f_i})$ of lengths $\log M$ and $L_M$, with Hamming weights $\mathrm{wt}(\ve{e}^I_{f_i}) \leq \epsilon_1$ and $\mathrm{wt}(\ve{e}^D_{f_i}) \leq \epsilon_2$ for all $i\in [t]$. The received set $\mathcal{S}' \subseteq \Sigma_2^L$ can then be written as
$$ \mathcal{S}' = \bigcup_{i=1}^M \left\{ \begin{array}{ll}
\ve{x}_i, & \text{if } i \notin \mathcal{F}, \\
\ve{x}_i + \ve{e}_i, & \text{if }i \in \mathcal{F}
\end{array} \right. . $$
Throughout the paper the $(t,\epsilon_1,\epsilon_2)$-channel will refer to the entity which, given an input set $\mathcal{S} \in \mathcal{I}_M^L$, outputs a received set $\mathcal{S}'$ resulting from arbitrary $\mathcal{F}$ and $\ve{e}_{f_1}, \dots, \ve{e}_{f_t}$ as described above. This set of all possible channel outputs is denoted by $B(\mathcal{S})$.  Note that when there are errors in the indices, the erroneous sequences $\ve{x}_{f_j}' \triangleq \ve{x}_{f_j} + \ve{e}_{f_j}$, $j \in [t]$ are not necessarily distinct from each other or from the error-free sequences and in this case these sequences adjoin and appear as a single sequence at the receiver. Therefore the number of received sequences can be less than $M$, i.e., $M-t \leq |\mathcal{S}'| \leq M$. In particular here it is also possible that the received set $\mathcal{S}' \notin \mathcal{I}_M^L$, since  some indices might not be present in the received set or others might appear multiple times.
Another particularity of the channel is that different error patterns $\mathcal{F}$ and $\ve{e}_{f_1},\dots,\ve{e}_{f_t}$ might lead to the same channel output $\mathcal{S}'$. We will use the following standard definition of an error-correcting code.
\begin{defn}[$(t,\epsilon_1,\epsilon_2)$-indexed-set code] {\label{def:ecc}}
	A code $\mathcal{C} \subseteq \mathcal{I}_M^L$ is called a $(t,\epsilon_1,\epsilon_2)$-indexed-set code, if $B(\mathcal{S}_1) \cap B(\mathcal{S}_2) = \emptyset$ for every pair $\mathcal{S}_1,\mathcal{S}_2 \in \mathcal{C}$ with $\mathcal{S}_1\neq \mathcal{S}_2$. Accordingly, the redundancy of an indexed-set code $\mathcal{C} \subseteq \mathcal{I}_M^L$ is defined to be
	$$ r(\mathcal{C}) = ML_M - \log |\mathcal{C}|. $$
\end{defn}%
By this definition, an indexed-set code is a set of codewords for which, for each channel output $\mathcal{S}' \subseteq \Sigma_2^L$, there exists at most one codeword which could have resulted in this exact channel output $\mathcal{S}'$. Note that here, each codeword is not a vector, as in the standard channel coding problem, but a set of indexed vectors. In this paper, we distinguish between errors in the index of sequences and data part of the sequences due to the following reasons. It is observed that the sequencing error rates at the beginning of DNA strands are lower with several sequencing technologies \cite{EZ17,HMG18,Oetal17}. Second, from a theoretical point of view, errors inside the indices have a different character than those in the data part, as they do not affect data directly but hinder the correct identification of the strand order. We will also elaborate in this paper that the redundancy required to correct errors in the indices is significantly smaller than that in the data part of sequences. Finally, the channel model is strongly connected to the more general model presented in \cite{LSWY18} as follows. 
\begin{enumerate}
	\item Each $(0,t,\epsilon)_\sub$-correcting code \cite{LSWY18} is a $(t,\epsilon_1,\epsilon_2)$-indexed-set code, if $\epsilon_1+\epsilon_2\leq \epsilon$.
	\item Each $(t,\epsilon_1,\epsilon_2)$-indexed-set code is a $(0,t,\epsilon)_\sub$-correcting code \cite{LSWY18}, if $\epsilon \leq \min(\epsilon_1,\epsilon_2)$.
\end{enumerate}
\section{Construction}
Finding codes that can correct errors from the DNA-storage channel, one faces two main challenges that have to be tackled. To begin with, substitution errors that are solely in the data part of the sequences can be corrected by standard error-correcting schemes, such as tensor-product codes \cite{Wol06}, which we will discuss in more detail later. However, errors in the indices of sequences will corrupt the ordering of the sequences, which hinders the direct employment of tensor-product codes. We therefore will construct a code that first enables to reconstruct the correct ordering of the sequences using so called \emph{anchors}, and then uses a tensor-product code to correct the errors in the data part of the sequences. The anchors are defined as follows.
\begin{defn}[Anchor]
	Let $l,t,\epsilon_1,\epsilon_2 \in \mathbb{N}$ and $\ve{a}_1,\dots, \ve{a}_M \in \Sigma_2^l$ be $M$ vectors of length $l$ with $2^l \geq M$. Further, denote by $\mathsf{MDS}[M,2t]$ a maximum-distance-separable (MDS) code of length $M$ and redundancy $2t$ over the field $\Sigma_{2^l}$. The set of \emph{anchor} vectors $\mathcal{A}(l,t,\epsilon_1,\epsilon_2)$ is defined to be
	$$ \mathcal{A}(l,t,\epsilon_1,\epsilon_2) \hspace{-.08cm}=\hspace{-.08cm} \left\{\hspace{-.16cm} \begin{array}{l}
	(\ve{a}_1,\dots,\ve{a}_M) \in \Sigma_2^{Ml}\hspace{-.08cm} : \; \forall i,j\hspace{-.08cm} \in\hspace{-.08cm} [M], i\neq j: \\
	d(\ve{a}_i, \ve{a}_j) > 2\epsilon_2, \text{ if } d(\ve{I}(i), \ve{I}(j)) \leq 2\epsilon_1, \\
	(\ve{a}_1,\dots,\ve{a}_M) \in \mathsf{MDS}[M,2t]  
	\end{array} \hspace{-.16cm} \right\}. $$
	That is, if the indices $\ve{I}(i), \ve{I}(j)$ of two vectors $\ve{a}_i, \ve{a}_j$ have distance at most $2\epsilon_1$, the vectors have distance more than $2\epsilon_2$. Further, the equivalents of the vectors $\ve{a}_1,\dots,\ve{a}_M$ in $\Sigma_{2^l}$ are a codeword of an MDS code with minimum distance $2t+1$.
\end{defn}
This definition implies that the anchor vectors have both a large intra-anchor distance between vectors of one anchor and a large inter-anchor distance between two anchors due to the MDS code. Note that for $2\epsilon_1 = \log M$ and $t=0$ this definition is equivalent to a standard error-correcting code, which corrects $\epsilon_2$ errors. The redundancy required to force such a constraint on a collection of vectors will be calculated later. For the case of $t=0$, the set $\mathcal{A}(l,0,\epsilon_1,\epsilon_2)$ is called clustering-correcting code, and explicit constructions which require only one bit of redundancy and can be encoded and decoded efficiently can be found in \cite{SYLW19}. The anchoring property will be used to reconstruct the ordering of the sequences. After the ordering of sequences is restored, it is possible to correct the errors in the sequences using tensor-product codes \cite{Wol06}, which are defined as follows.
\begin{defn}[Tensor-product code]
	Let $\mathcal{C}_1 \subseteq \Sigma_{2}$ be a linear $[L_M,L_M-r_1,\epsilon_2]$ binary $\epsilon_2$-error-correcting code of length $L_M$, redundancy $r_1$ and parity-check matrix \mbox{$\ve{H}_1 \in \Sigma_2^{r_1\times L_M}$} and let $\mathcal{C}_2 \subseteq \Sigma_{2^{r_1}}$ be a linear $[M,M-r_2,t]$ code over the field $\Sigma_{2^{r_1}}$. The \emph{tensor-product code} is then defined to be
	$$
		\mathsf{TPC}(t,\epsilon_2) = \left\{ \begin{array}{ll}
		(\ve{u}_1,\dots,\ve{u}_{M}) \in \Sigma_{2}^{ML_M}:  \\
		(\ve{s}_1,\dots,\ve{s}_{M}) \in \mathcal{C}_2
		\end{array} \right\},
	$$
	where $\ve{s}_i = \ve{u}_i \ve{H}_1^\mathrm{T}$ are syndromes whose equivalents in the finite field $\Sigma_{2^{r_1}}$ form a codeword of $\mathcal{C}_2$. The overall redundancy of the tensor-product code is $r_1r_2$ bits.
\end{defn}
Correcting errors using the tensor-product code is done as follows \cite{Wol06}. Assume the word $\ve{U} = (\ve{u}_1',\dots,\ve{u}_{M}')$ is received, where at most $t$ vectors $\ve{u}_i'$ have been affected by at most $\epsilon_2$ errors each. The receiver first computes the syndromes $\ve{s}_i' = \ve{u}_i' \ve{H}_1^\mathrm{T}$ of all vectors. Since there are at most $t$ syndromes corrupted, the correct syndromes $\ve{s}_i$ can be recovered using the code $\mathcal{C}_2$. Now, in each row, $\epsilon_2$ errors can be corrected using the knowledge of the correct syndrome $\ve{s}_i$ and the code $\mathcal{C}_1$. Combining the anchoring property with the tensor-product code leads to the following construction.
\begin{construction} \label{con:anchor}
	Let $l,t,\epsilon_1,\epsilon_2 \in \mathbb{N}$ with $l \geq \log M
	$. Further, $\mathsf{TPC}(t,\epsilon_2)$ denotes a tensor-product code over an array of size $M\times L_M$. We define the construction $\mathcal{C}_\mathrm{A} \subseteq \mathcal{I}_M^L$ as
	$$
	\mathcal{C}_\mathrm{A} =  \left\{  \begin{array}{l}\mathcal{S} =\{(\ve{I}(1), \ve{a}_1, \ve{v}_1), \dots, (\ve{I}(M), \ve{a}_M,\ve{v}_M) \} : \\
	(\ve{a}_1,\dots,\ve{a}_M) \in \mathcal{A}(l,t,\epsilon_1,\epsilon_2), \\
	((\ve{a}_1,\ve{v}_1), \dots, (\ve{a}_M, \ve{v}_M)) \in \mathsf{TPC}(t,\epsilon_2)
	\end{array} \right\}.
	$$
\end{construction}
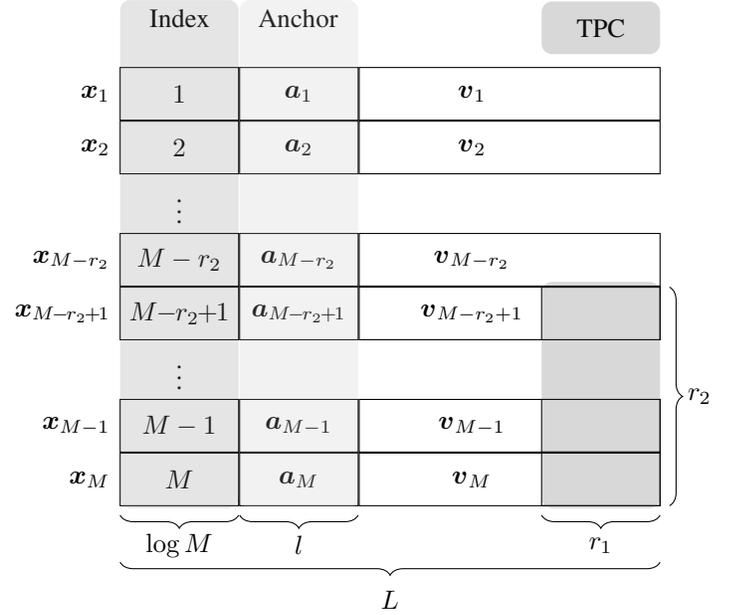
\begin{figure}
	\newcommand{\xwidth}{0cm}
	\newcommand{\indexwidth}{1.575cm}
	\newcommand{\ancorwidth}{1.575cm}
	\newcommand{\datawidth}{4cm}
	\newcommand{\datashift}{-1cm}
	\newcommand{\mht}{0.7cm}
	
	\tikzstyle{block2} = [rectangle, draw, minimum height=\mht]
	\hspace{-.25cm}\begin{tikzpicture}
	
	\node[block2 ,minimum width=\indexwidth] (index) {$1$};
	\node[left= 0pt of index,minimum width=\xwidth] {$\ve{x}_1$};
	\node[block2, right= 0pt of index,minimum width=\ancorwidth] (information) {$\ve{a}_1$};
	\node[block2, right= 0pt of information, minimum width=\datawidth] {\hspace{\datashift}$\ve{v}_1$};
	
	\node[block2, below= 0pt of index,minimum width=\indexwidth] (index2) {$2$};
	\node[left= 0pt of index2,minimum width=\xwidth] {$\ve{x}_2$};
	\node[block2, right= 0pt of index2, minimum width=\ancorwidth] (information2) {$\ve{a}_2$};
	\node[block2, right= 0pt of information2, minimum width=\datawidth] {\hspace{\datashift}$\ve{v}_2$};

	\node[below = 0pt of index2,minimum height=\mht] (dots) {$\vdots$};

	\node[block2, below= 0pt of dots,minimum width=\indexwidth] (indexMR) {$M-r_2$};
	\node[left= 0pt of indexMR,minimum width=\xwidth] {$\ve{x}_{M-r_2}$};
	\node[block2, right= 0pt of indexMR,minimum width=\ancorwidth] (informationMR) {$\ve{a}_{M-r_2}$};
	\node[block2, right= 0pt of informationMR, minimum width=\datawidth] {\hspace{\datashift}$\ve{v}_{M-r_2}$};
	
	\node[block2, below= 0pt of indexMR,minimum width=\indexwidth] (indexMR1) {$M\hspace{-.1cm}-\hspace{-.1cm}r_2\hspace{-.1cm}+\hspace{-.1cm}1$};
	\node[left= 0pt of indexMR1,minimum width=\xwidth] {$\ve{x}_{M\hspace{-.03cm}-\hspace{-.03cm}r_2\hspace{-.03cm}+\hspace{-.03cm}1}$};
	\node[block2, right= 0pt of indexMR1,minimum width=\ancorwidth] (informationMR1) {$\ve{a}_{M\hspace{-.03cm}-\hspace{-.03cm}r_2\hspace{-.03cm}+\hspace{-.03cm}1}$};
	\node[block2, right= 0pt of informationMR1, minimum width=\datawidth] (dataMR1) {\hspace{\datashift}$\ve{v}_{M-r_2+1}$};
	\node[block2, right= 0pt of dataMR1,minimum width=\indexwidth, anchor=east] (tpcMR1) {};

	\node[below = 0pt of indexMR1,minimum height=\mht] (dots2) {$\vdots$};

	\node[block2, below= 0pt of dots2,minimum width=\indexwidth] (indexM1) {$M-1$};
	\node[left= 0pt of indexM1,minimum width=\xwidth] {$\ve{x}_{M-1}$};
	\node[block2, right= 0pt of indexM1,minimum width=\ancorwidth] (informationM1) {$\ve{a}_{M-1}$};
	\node[block2, right= 0pt of informationM1, minimum width=\datawidth] (dataM1) {\hspace{\datashift}$\ve{v}_{M-1}$};
	\node[block2, right= 0pt of dataM1,minimum width=\indexwidth, anchor=east] {};
	
	\node[block2, below= 0pt of indexM1,minimum width=\indexwidth] (indexM) {$M$};
	\node[left= 0pt of indexM,minimum width=\xwidth] {$\ve{x}_M$};
	\node[block2, right= 0pt of indexM,minimum width=\ancorwidth] (informationM) {$\ve{a}_M$};
	\node[block2, right= 0pt of informationM, minimum width=\datawidth] (dataM) {\hspace{\datashift}$\ve{v}_M$};
	\node[block2, right= 0pt of dataM,minimum width=\indexwidth, anchor=east] (tpcM) {};
	
	\node at (0,1) {Index};
	\node at (\indexwidth/2+\ancorwidth/2,1) {Anchor};
	\node at ($(tpcMR1) + (0, 4*\mht+1cm)$) (TPClabel) {TPC};
	
	\draw [decorate,decoration={brace,amplitude=5pt, mirror}] ($(indexM) + (-\indexwidth/2,-\mht/2-0.125cm)$) -- ($(indexM) +  (\indexwidth/2,-\mht/2-0.125cm)$) node [black,midway,yshift=-0.4cm] { $\log M$};
	
	\draw [decorate,decoration={brace,amplitude=5pt, mirror}] ($(informationM) + (-\ancorwidth/2,-\mht/2-0.125cm)$) -- ($(informationM) +  (\ancorwidth/2,-\mht/2-0.125cm)$) node [black,midway,yshift=-0.4cm] {$l$};
	
	\draw [decorate,decoration={brace,amplitude=5pt, mirror}] ($(tpcM) + (-\indexwidth/2,-\mht/2-0.125cm)$) -- ($(tpcM) +  (\indexwidth/2,-\mht/2-0.125cm)$) node [black,midway,yshift=-0.4cm] {$r_1$};
	
	\draw [decorate,decoration={brace,amplitude=5pt}] ($(tpcMR1) + (\indexwidth/2+0.125cm,\mht/2)$) -- ($(tpcM) + (\indexwidth/2+0.125cm,-\mht/2)$) node [black,midway,xshift=0.4cm] {$r_2$};
	
	\draw [decorate,decoration={brace,amplitude=5pt, mirror}] ($(indexM) + (-\indexwidth/2,-\mht/2-.75cm)$) -- ($(dataM) +  (\datawidth/2,-\mht/2-.75cm)$) node [black,midway,yshift=-0.5cm] {$L$};

	\fill[gray, opacity=0.2, rounded corners] (-\indexwidth/2, 1.25cm) rectangle ($(indexM) + (0.5*\indexwidth, -0.55*\mht)$);
	
	\fill[lightgray, opacity=0.2, rounded corners] ($(information) + (-\ancorwidth/2, 1.25cm)$) rectangle ($(informationM) + (0.5*\ancorwidth, -0.55*\mht)$);
	
	\fill[darkgray, opacity=0.2, rounded corners] ($(tpcMR1) + (-\indexwidth/2, 0.6*\mht)$) rectangle ($(tpcM) + (\indexwidth/2, -0.55*\mht)$);
	
	\fill[darkgray, opacity=0.2, rounded corners] ($(TPClabel) + (-\indexwidth/2, 0.5*\mht)$) rectangle ($(TPClabel) + (\indexwidth/2, -0.5*\mht)$);
	\end{tikzpicture}
	\vspace{-.75cm}
	\caption{Schematic of Construction \ref{con:anchor}}
	\vspace{-.5cm}
\end{figure}
Note that with this construction, the anchors $\ve{a}_1,\dots,\ve{a}_M$ can also contain user data. The correctness of Construction \ref{con:anchor} and its decoding algorithm are presented in the following.
\begin{lemma}
	Construction \ref{con:anchor} is a $(t,\epsilon_1,\epsilon_2)$-indexed-set code.
\end{lemma}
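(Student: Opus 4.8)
The plan is to prove the lemma by exhibiting an explicit decoder $D$ that maps every channel output back to the transmitted codeword. Since such a $D$ is a deterministic function of $\mathcal{S}'$ alone, the existence of even a single $\mathcal{S}'\in B(\mathcal{S}_1)\cap B(\mathcal{S}_2)$ would force $\mathcal{S}_1=D(\mathcal{S}')=\mathcal{S}_2$, which is exactly the defining property of a $(t,\epsilon_1,\epsilon_2)$-indexed-set code in Definition~\ref{def:ecc}. So I would fix a codeword $\mathcal{S}\in\mathcal{C}_\mathrm{A}$, an arbitrary output $\mathcal{S}'\in B(\mathcal{S})$ with corrupted slot set $\mathcal{F}$, $|\mathcal{F}|\le t$, and describe how to recover first the ordering and then the data.

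First I would record a structural fact that tames the channel's merging of strands. If two distinct source strands produced the same output strand, their received indices would coincide, giving $d(\ve{I}(a),\ve{I}(b))\le 2\epsilon_1$, while their received anchors would coincide, giving $d(\ve{a}_a,\ve{a}_b)\le 2\epsilon_2$; these contradict the intra-anchor condition of $\mathcal{A}(l,t,\epsilon_1,\epsilon_2)$. Hence no merging occurs, $|\mathcal{S}'|=M$, and received strands biject with slots. To recover the ordering I would build a tentative ordered anchor array by placing each received strand in the slot whose index it matches exactly, declaring an erasure wherever a slot receives zero or at least two strands. A charging argument against the at most $t$ corrupted strands—each one can spoil only its home slot and the slot it migrates into, and an erroneous coordinate can arise only at a corrupted slot—shows that the numbers $e$ of erroneous and $s$ of erased anchor coordinates satisfy $2e+s\le 2t$. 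Since the anchors form an $\mathsf{MDS}[M,2t]$ codeword of minimum distance $2t+1$, combined error-and-erasure decoding then recovers $\ve{a}_1,\dots,\ve{a}_M$ exactly, revealing the intended ordering.

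Knowing the true anchors, I would assign each received strand $\ve{y}$ to the unique slot $i$ with $d(\ve{I}_{\ve{y}},\ve{I}(i))\le\epsilon_1$ and $d(\ve{a}_{\ve{y}},\ve{a}_i)\le\epsilon_2$: uniqueness holds because any two admissible slots would have indices within $2\epsilon_1$ and anchors within $2\epsilon_2$, again contradicting the intra-anchor condition, while the no-merging fact makes this assignment a bijection onto $[M]$. The outcome is the correctly ordered received data array in which at most $t$ rows differ from the stored array, each in at most $\epsilon_2$ positions. Feeding this array to the tensor-product decoder of $\mathsf{TPC}(t,\epsilon_2)$ described after its definition corrects the at most $t$ corrupted rows and returns $\mathcal{S}$, completing the decoder.

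I expect the delicate step to be the ordering-recovery bound $2e+s\le 2t$: this is precisely where index corruption is converted into a controlled number of anchor erasures and errors, and it must line up exactly with the MDS minimum distance $2t+1$. The intra-anchor condition is invoked three times (no merging, the MDS error/erasure split, and uniqueness of the final assignment), whereas the MDS and tensor-product properties each perform one clearly delimited task; verifying that the budgets $\epsilon_1,\epsilon_2,t$ fit together at every stage is the main bookkeeping burden.
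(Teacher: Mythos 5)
Your proposal is correct and follows essentially the same two-step decoder as the paper's proof: use the anchoring property to rule out merging of strands and to recover the ordering (exact-index placement with erasures, then MDS decoding of the anchors, then unique assignment of each strand to a slot via index/anchor proximity), followed by tensor-product decoding of the data. The only substantive difference is that you make the MDS error-and-erasure step explicit via the charging bound $2e+s\le 2t$, whereas the paper simply asserts that unique decoding of the anchor vector succeeds, citing the analogous argument in \cite{LSWY18}.
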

\begin{proof}
	We will prove the correctness of Construction \ref{con:anchor} by providing an algorithm that can be used to correct errors from the $(t,\epsilon_1,\epsilon_2)$-channel. The decoding algorithm can be split into the following two steps.
	\begin{enumerate}
		\item Retrieve the correct order of sequences using the anchoring property of $\ve{a}_1,\dots,\ve{a}_M$.
		\item Correct errors inside the sequences using the tensor-product code $\mathsf{TPC}(t,\epsilon_2)$.
	\end{enumerate}
	Assume $\mathcal{S} = \{ \ve{x}_1,\dots,\ve{x}_M \} \in \mathcal{C}_\mathrm{A}$ has been stored and $\mathcal{S}'= \{ \ve{x}_1',\dots,\ve{x}_M' \} \in B(\mathcal{S})$ has been received after transmission over a $(t,\epsilon_1,\epsilon_2)$-channel. We will write $\ve{x}_i' = (\ve{I}(i'),\ve{a}_i',\ve{v}_i')$, which is either $\ve{x}_i' = \ve{x}_i$, if the sequence was received correctly, i.e., $i \notin \mathcal{F}$, or $\ve{x}_i' = \ve{x}_i + \ve{e}_i$, if the sequence was received in error, i.e., $i \in \mathcal{F}$. This correct ordering of received sequences is however only used to simplify notation and is not known to the receiver, as the indices $\ve{I}(i')$ can be erroneous. Note that due to the anchoring property, it is guaranteed that an erroneous sequence can never adjoin with another sequence and therefore $|\mathcal{S}'| = M$. The anchors can be fully recovered using their MDS property as follows. Declare all positions $i\in [M]$, where there is not exactly one index present, i.e., $i : |\{j : \ve{I}(j') = \ve{I}(i) \}| \neq 1$ as erasures, and fill all remaining positions with the corresponding anchors $\ve{a}_{i}'$. Although some anchors might have the wrong position, decoding the resulting vector of length $M$ with a unique decoding algorithm yields the correct anchors $\ve{a}_1,\dots,\ve{a}_M$ (cf. \cite[Con. 1]{LSWY18}). Using the anchors, it is possible to assign each sequence $\ve{x}_j'$ to its correct position $i$ by finding the single sequence $\ve{x}_j' \in \mathcal{S}'$ with $d(\ve{I}(i),\ve{I}(j')) \leq \epsilon_1$ and $d(\ve{a}_i,\ve{a}_j') \leq \epsilon_2$. There is exactly one sequence $j=i$ with that property. Assume on the contrary, there is more than one sequence (apart from the correct sequence $\ve{x}_i'$), which fulfills this property. Then, there would be a sequence $\ve{x}_j'$, $j\neq i$ with $d(\ve{I}(i),\ve{I}(j')) \leq \epsilon_1$ and $d(\ve{a}_i,\ve{a}_j') \leq \epsilon_2$, which implies that $d(\ve{I}(i),\ve{I}(j)) \leq 2 \epsilon_1$ and also $d(\ve{a}_i,\ve{a}_j) \leq 2\epsilon_2$, which contradicts the anchoring property. We therefore can reconstruct the array $((\ve{a}_1',\ve{v}_1'), \dots, (\ve{a}_M', \ve{v}_M'))$ in the correct order. Since each row $(\ve{a}_1',\ve{v}_1')$ has at most $\epsilon_2$ errors, these errors can be corrected using the tensor-product code, which completes the proof of the correctness of Construction \ref{con:anchor}.
\end{proof}
The redundancy of Construction \ref{con:anchor} can be decomposed into the redundancy required for the anchoring property and the redundancy of the tensor-product code and is given as follows.
\begin{thm}
	For any $t,\epsilon_1,\epsilon_2$ the redundancy of $\mathcal{C}_\mathrm{A}$ is
	$$ r(\mathcal{C}_\mathrm{A}) = r_\mathrm{A} + r_1r_2, $$
	where $r_\mathrm{A} = 2tl - M \log (1 - 2^{-l}B_{2\epsilon_1}(\log M) B_{2\epsilon_2}(l))$. Therefore, for fixed $t,\epsilon_1,\epsilon_2$, and arbitrary small $\delta>0$, for $M\rightarrow \infty$ there exists an explicit construction $\mathcal{C}_\mathrm{A}$ with redundancy
	$$ r(\mathcal{C}_A) \leq (4t+2\delta) \log M + 2t \epsilon_2 \lceil\log L_M\rceil+ 1+ o(1). $$
\end{thm}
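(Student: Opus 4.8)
The plan is to prove the statement in two stages: first establish the exact decomposition $r(\mathcal{C}_\mathrm{A}) = r_\mathrm{A} + r_1 r_2$ by a counting argument, and then bound each summand for a suitable choice of parameters. For the decomposition I would write each codeword as a pair $(A,V)$ with $A = (\ve{a}_1,\dots,\ve{a}_M)$ and $V = (\ve{v}_1,\dots,\ve{v}_M)$, and count $|\mathcal{C}_\mathrm{A}|$ by choosing $A$ first and then $V$. There are $|\mathcal{A}(l,t,\epsilon_1,\epsilon_2)|$ admissible anchor tuples, and I claim that for \emph{every} fixed $A$ the number of $V$ completing it to a $\mathsf{TPC}(t,\epsilon_2)$ codeword equals $2^{M(L_M-l)-r_1 r_2}$, independently of $A$. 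Indeed, splitting $\ve{H}_1 = [\ve{H}_1^I \mid \ve{H}_1^V]$ into its anchor and data columns, the row syndrome is $\ve{s}_i = \ve{a}_i (\ve{H}_1^I)^{\mathrm T} + \ve{v}_i (\ve{H}_1^V)^{\mathrm T}$, so $V \mapsto (\ve{s}_1,\dots,\ve{s}_M)$ is an affine surjection onto $\Sigma_2^{Mr_1}$ provided $\ve{H}_1^V$ has full row rank $r_1$, which I arrange by taking $\mathcal{C}_1$ in systematic form with its parity columns inside the data part (possible since $L_M - l \gg r_1$). The preimage of the (shifted) code $\mathcal{C}_2$ therefore has size $|\mathcal{C}_2|\cdot 2^{M(L_M-l)-Mr_1} = 2^{M(L_M-l)-r_1 r_2}$, regardless of the constant shift induced by $A$. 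Hence $|\mathcal{C}_\mathrm{A}| = |\mathcal{A}|\cdot 2^{M(L_M-l)-r_1 r_2}$, and plugging into $r(\mathcal{C}_\mathrm{A}) = ML_M - \log|\mathcal{C}_\mathrm{A}|$ gives $r(\mathcal{C}_\mathrm{A}) = (Ml - \log|\mathcal{A}|) + r_1 r_2$, i.e. $r_\mathrm{A} = Ml - \log|\mathcal{A}|$.

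Next I would lower-bound $|\mathcal{A}|$ to get $r_\mathrm{A} \le 2tl - M\log(1-p)$ with $p = 2^{-l}B_{2\epsilon_1}(\log M)B_{2\epsilon_2}(l)$. Let $\mathcal{D} \subseteq \Sigma_2^{Ml}$ be the set of anchor tuples satisfying only the pairwise distance constraint. Building the $\ve{a}_i$ sequentially, when placing $\ve{a}_i$ the forbidden set is the union of the radius-$2\epsilon_2$ balls around the already placed neighbours $\ve{a}_j$ (those with $d(\ve{I}(i),\ve{I}(j)) \le 2\epsilon_1$), of total size at most $B_{2\epsilon_1}(\log M)B_{2\epsilon_2}(l) = p\,2^l$; thus each $\ve{a}_i$ has at least $2^l(1-p)$ valid choices and $|\mathcal{D}| \ge 2^{Ml}(1-p)^M$. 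This greedy bound is exactly what beats the naive union bound $1-Mp$ and produces the factor $(1-p)^M$. To fold in the MDS constraint without losing this factor, I would average over the $2^{2tl}$ cosets of the ($\Sigma_2$-linear, codimension-$2tl$) MDS code: some coset contains at least $|\mathcal{D}|/2^{2tl} \ge 2^{(M-2t)l}(1-p)^M$ elements of $\mathcal{D}$, and since a coset of an MDS code has the same distance and unique-decoding properties it may replace $\mathsf{MDS}[M,2t]$ in the anchor definition. This yields $|\mathcal{A}| \ge 2^{(M-2t)l}(1-p)^M$ and hence $r_\mathrm{A} \le 2tl - M\log(1-p)$; the explicit version, together with the additive constant $1$, follows by instead invoking the explicit clustering-correcting codes of \cite{SYLW19}.

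Finally, for the asymptotics I would optimise $l$ and the tensor-product parameters. Choosing $l = \log M + O(\log\log M)$ makes $p \le \tfrac{\ln 2}{M}$, so $(1-p)^M \ge \tfrac12$ and $-M\log(1-p) \le 1$, while $2tl = 2t\log M + o(\log M)$; thus $r_\mathrm{A} \le 2t\log M + 2\delta\log M + 1$ for $M$ large. For the tensor-product code I would take $\mathcal{C}_2$ to be a Reed--Solomon code ($r_2 = 2t$), which forces the field $\Sigma_{2^{r_1}}$ to have size at least $M$, i.e. $r_1 \ge \lceil\log M\rceil$, and $\mathcal{C}_1$ a BCH code correcting $\epsilon_2$ errors, so that $r_1 = \max(\lceil\log M\rceil,\ \epsilon_2\lceil\log(L_M+1)\rceil)$ and $r_1 r_2 \le 2t\log M + 2t\epsilon_2\lceil\log L_M\rceil + o(\log M)$. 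Summing the two contributions gives the claimed bound. The main obstacle is the second stage: securing the clean factor $(1-p)^M$ rather than the weaker union bound and then reconciling it with the MDS constraint via coset averaging; a secondary but essential point is recognising that the field-size requirement on $\mathcal{C}_2$ is precisely what turns the naive $2t\log M$ into the stated $4t\log M$.
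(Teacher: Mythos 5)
Your proof is correct and follows the same overall route as the paper's: split the redundancy as $r_\mathrm{A} + r_1 r_2$, obtain the anchor redundancy by pigeonholing the distance-constrained tuples over the $2^{2tl}$ cosets of the MDS code, and get the asymptotics by taking $l$ slightly larger than $\log M$ so that $-M\log(1-p)$ is negligible (with your notation $p$ and $\mathcal{D}$). Three points of execution differ, none harmful. First, you actually prove the decomposition $|\mathcal{C}_\mathrm{A}| = |\mathcal{A}|\cdot 2^{M(L_M-l)-r_1 r_2}$, via the full row rank of the data columns of $\ve{H}_1$ and the coset-invariance of the count; the paper simply asserts that the two redundancies add, so this is a genuine strengthening. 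Second, you re-derive $|\mathcal{D}| \ge 2^{Ml}(1-p)^M$ by an elementary sequential greedy argument, whereas the paper imports this bound as the cardinality of clustering-correcting codes from \cite{SYLW19}, needing that reference afterwards only for the explicitness claim and the additive $+1$, which you also invoke correctly. Third, your tensor-product instantiation differs: the paper uses alternant codes \cite[ch.~5]{Rot06} for both component codes, with $r_1 = \epsilon_2\lceil\log L_M\rceil$ and $r_2 = 2t\lceil\log M/r_1\rceil$, so that $r_1 r_2 \le 2t\log M + 2t\epsilon_2\lceil\log L_M\rceil$; you instead enlarge $r_1$ to $\max\left(\lceil\log M\rceil,\ \epsilon_2\lceil\log(L_M+1)\rceil\right)$ so that $\mathcal{C}_2$ can be Reed--Solomon with $r_2 = 2t$, giving $r_1 r_2 = 2t\max\left(\lceil\log M\rceil,\ \epsilon_2\lceil\log(L_M+1)\rceil\right)$, which is never worse (a max in place of a sum) and is absorbed by the same $2\delta\log M$ slack. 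One small interpretive correction to your closing remark: in the paper's instantiation the extra $2t\log M$ does not arise from a field-size requirement on $\mathcal{C}_2$ (its alternant $\mathcal{C}_2$ lives over the small field $\Sigma_{2^{r_1}}$); it arises because correcting $t$ symbol errors in a length-$M$ code over that small field costs $2t\lceil\log M/r_1\rceil$ redundancy symbols of $r_1$ bits each --- quantitatively the same $2t\log M$ your field-size argument produces, but by a different mechanism.
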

\begin{proof}
	From the cardinality of clustering-correcting codes \cite{SYLW19} and the fact that the MDS code with redundancy $2t$ has $2^{2tl}$ cosets, there exists one coset of the MDS code with
	$$ |\mathcal{A}(l,t,\epsilon_1,\epsilon_2)| \geq \frac{1}{2^{2tl}} (2^l -B_{2\epsilon_1}(\log M) B_{2\epsilon_2}(l))^{M} $$
	by the pigeonhole principle. From this follows the redundancy $r_\mathrm{A}$ required for the anchoring property. Next, the redundancy of the tensor-product codes is $r_1r_2$. Using alternant codes \cite[ch. 5]{Rot06} $\mathcal{C}_1$ and $\mathcal{C}_2$, we obtain redundancies $r_1=\epsilon_2 \lceil \log L_M\rceil$ and $r_2 = 2t \lceil \frac{\log M}{r_1} \rceil$, if $r_1 \leq \log M$ and $r_2 = 2t$, otherwise. Using $l=(1+\delta) \log M$ yields $r_\mathrm{A} = 2t(1+\delta) \log M + o(1)$ and the asymptotic bound follows.
\end{proof}
Note that for $t=1$, the construction can be improved by using a Hamming code for $\mathcal{C}_2$ and an $\mathsf{MDS}[M,1]$ code with redundancy $1$ for the anchors is sufficient, which yields a redundancy of approximately $2\log M + \epsilon_2\log L_M + o(1)$.
\section{Sphere Packing Bound}
The derivation of the sphere packing bound is based on the sets $B(\mathcal{S})$ of possible outputs of the channel, when $\mathcal{S} \in \mathcal{I}_M^L$ is the input. The bound is derived by using the fact that $B(\mathcal{S})$ must be distinct for different codewords $\mathcal{S}$ to guarantee unique decoding to one codeword. In this and the following section, we will abbreviate the size of the Hamming ball of radius $r$ by $B_r(n)\triangleq \sum_{i=0}^{n}\binom{n}{i}$. The main result is as follows.
\begin{thm}
	The cardinality of any $(t,\epsilon_1,\epsilon_2)$-indexed-set code $\mathcal{C} \subseteq \mathcal{I}_M^L$ is at most
	$$ |\mathcal{C}| \leq \frac{2^{ML_M}}{\binom{M}{t} (B_{\epsilon_2}(L_M)-1)^t}. $$
	Therefore, the redundancy is at least
	$$ r(\mathcal{C}) \geq t\log M  + t \epsilon_2 \log (L_M) - t \log (t \epsilon_2^{\epsilon_2}) . $$
\end{thm}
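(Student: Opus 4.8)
The plan is to prove a sphere-packing (volume) bound by exhibiting, for each codeword, a large family of \emph{distinct} channel outputs and then invoking the disjointness of the output balls guaranteed by Definition~\ref{def:ecc}. Since the claimed bound involves only $\binom{M}{t}$ and $B_{\epsilon_2}(L_M)$ and does not mention $\epsilon_1$, I would not use the full power of the channel; instead I would restrict attention to the sub-family of error events that leave every index untouched and perturb only the data parts. Concretely, for a fixed input $\mathcal{S} = \{(\ve{I}(i),\ve{u}_i)\}_{i=1}^M$, I consider all outputs obtained by selecting a subset $\mathcal{F} \subseteq [M]$ of exactly $t$ sequences and replacing each $\ve{u}_i$, $i \in \mathcal{F}$, by $\ve{u}_i + \ve{e}_i^D$ for some \emph{nonzero} data error $\ve{e}_i^D \in \Sigma_2^{L_M}$ of weight at most $\epsilon_2$. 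Each such event is a legal channel action (no index errors, at most $\epsilon_2$ data errors, at most $t$ corrupted strands), so every resulting set lies in $B(\mathcal{S})$.

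The first key step is to count these outputs and argue they are pairwise distinct. Because all indices are preserved and $\ve{I}(\cdot)$ is a bijection onto distinct length-$\log M$ prefixes, every such output is a set of $M$ sequences carrying all $M$ indices exactly once; hence it is in bijection with a data tuple $(\ve{u}_1',\dots,\ve{u}_M') \in \Sigma_2^{ML_M}$ read off in index order. Two distinct error events produce distinct tuples, since $\mathcal{F}$ is recovered as the set of coordinates where the tuple differs from the original and on those coordinates the nonzero patterns are recovered directly. This gives $|B(\mathcal{S})| \geq \binom{M}{t}\,(B_{\epsilon_2}(L_M)-1)^t$, the factor $B_{\epsilon_2}(L_M)-1 = \sum_{i=1}^{\epsilon_2}\binom{L_M}{i}$ counting the nonzero weight-$\le\epsilon_2$ data errors.

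The second step is the packing argument. The outputs just described all embed into $\Sigma_2^{ML_M}$ via the read-off map, and for $\mathcal{S}_1 \neq \mathcal{S}_2$ the disjointness $B(\mathcal{S}_1)\cap B(\mathcal{S}_2)=\emptyset$ forces the corresponding embedded tuples to be distinct as well (two outputs sharing a tuple would be the same set, contradicting disjointness). Summing over all codewords yields
$$ |\mathcal{C}|\cdot\binom{M}{t}(B_{\epsilon_2}(L_M)-1)^t \leq 2^{ML_M}, $$
which is exactly the stated cardinality bound. For the redundancy I would then take logarithms in $r(\mathcal{C}) = ML_M - \log|\mathcal{C}|$ to get $r(\mathcal{C}) \geq \log\binom{M}{t} + t\log(B_{\epsilon_2}(L_M)-1)$, and lower-bound both binomial quantities via $\binom{M}{t}\geq (M/t)^t$ and $B_{\epsilon_2}(L_M)-1 \geq \binom{L_M}{\epsilon_2} \geq (L_M/\epsilon_2)^{\epsilon_2}$. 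Collecting the subtracted terms as $t\log(t\,\epsilon_2^{\epsilon_2})$ produces the claimed inequality.

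I expect the main obstacle to be the clean handling of the set-valued, variable-size nature of channel outputs: specifically, rigorously justifying that the restricted outputs inject into $\Sigma_2^{ML_M}$ and that ball disjointness transfers to this embedding. Once that embedding is established, the counting and the elementary binomial estimates are routine; the conceptual content is entirely in recognizing that ignoring index errors and forcing \emph{nonzero} data perturbations gives both a clean count and a collision-free embedding.
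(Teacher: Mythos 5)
Your proposal is correct and follows essentially the same route as the paper: the paper also restricts to the sub-channel with no index errors (noting $\epsilon_1=0$ is a legal special case of $\epsilon_1>0$), identifies the resulting outputs with elements of $\mathcal{I}_M^L \cong \Sigma_2^{ML_M}$, counts $\binom{M}{t}\,(B_{\epsilon_2}(L_M)-1)^t$ distinct outputs per codeword, and packs via disjointness of the balls inside $\mathcal{I}_M^L$. Your explicit injectivity argument (recovering $\mathcal{F}$ and the nonzero patterns from the read-off tuple) and the binomial estimates $\binom{M}{t}\geq (M/t)^t$, $B_{\epsilon_2}(L_M)-1\geq (L_M/\epsilon_2)^{\epsilon_2}$ simply fill in details the paper leaves implicit.
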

\begin{proof}
	Let $\mathcal{C} \subseteq \mathcal{I}_M^L$ be a $(t,\epsilon_1,\epsilon_2)$-indexed-set code. We consider first the case that $\epsilon_1=0$, i.e., there are only errors outside the indices and therefore all erroneous outcomes $\mathcal{S}' \in B(\mathcal{S}) \cap \mathcal{I}_M^L$ are again indexed sets. Due to the distinctness of error balls, every code $\mathcal{C} \subseteq \mathcal{I}_M^L$ satisfies $|\mathcal{C}| \cdot \min_{\mathcal{S} \in \mathcal{I}_M^L}|B(\mathcal{S})\cap \mathcal{I}_M^L| \leq |\mathcal{I}_M^L|$. Using this inequality we bound the code size $|\mathcal{C}|$ from above. Specifically, for all $\mathcal{S} \in \mathcal{I}_M^L$, we bound the number of erroneous outcomes $|B(\mathcal{S}) \cap \mathcal{I}_M^L|$ which are again indexed sets from below. Distinct elements $\mathcal{S}' \in B(\mathcal{S}) \cap \mathcal{I}_M^L$ are obtained as follows. For $\epsilon_1=0$ the indices of each sequence can be omitted and the stored set can be viewed as a binary array of $M$ rows and $L_M$ columns, where each row corresponds to one sequence. The number of possible error patterns is therefore
	$$|B(\mathcal{S}) \cap \mathcal{I}_M^L| \geq \binom{M}{t} (B_{\epsilon_2}(L_M)-1)^t,$$
	as there are $\binom{M}{t}$ ways to choose the erroneous rows and $B_{\epsilon_2}(L_M)-1$ possible substitution patterns per row. Finally, the case $\epsilon_1=0$ is a special case of $\epsilon_1>0$, as there are \emph{up to} $\epsilon_1$ errors inside the indices and thus the above bound also holds for arbitrary $\epsilon_1>0$ which concludes the proof.
\end{proof}
Note that by the definition of the channel it is possible that errors occur in the index of a sequence. However considering these errors for the sphere packing bound does not improve the bound, as we will illustrate in the following. Let us for simplicity assume that there has only been one error in the \mbox{$i$-th} sequence, and compare the two cases, where first, the error is in the data part, i.e., $t=\epsilon_2=1$ and $\epsilon_1 = 0$, and second, the error is in the index, i.e.,  $t=\epsilon_1=1$ and $\epsilon_2 = 0$. In the first case, it is sufficient to use a Hamming code of length $ML_M$ and redundancy $\log (ML_M)$, which is able to correct the single substitution, as the receiver can correctly concatenate the received sequences. On the other hand, when the error occurs inside the index of sequence $i$, resulting in index $j$, the receiver will see two sequences with the same index $j$ and no sequence with index $i$. In this case, the receiver only has to decide which of the two sequences with the index $j$ belongs to the position $i$. As this is merely a binary decision, from a sphere packing point of view, a redundancy of roughly a single bit is sufficient to correct this error. This surprisingly  indicates that errors inside indices of sequences are less harmful than those inside the data fields of sequences.

\section{Gilbert-Varshamov Bound}
In the last section we have derived upper bounds on the cardinality of error-correcting codes for indexed-set codes. On the other hand, we will now show how to find lower bounds on the achievable size of such error-correcting indexed-set codes based on Gilbert-Varshamov-like sphere covering arguments. For convenience, in the following we denote by $V(\mathcal{S})$ the set of indexed sets $\tilde{\mathcal{S}} \in \mathcal{I}_M^L$ which have intersecting errors ball with $\mathcal{S}\in \mathcal{I}_M^L$, i.e., $B(\mathcal{S}) \cap B(\tilde{\mathcal{S}}) \neq \emptyset$.
\begin{thm}
	There exists a $(t,\epsilon_1,\epsilon_2)$-indexed-set code $\mathcal{C}\subseteq \mathcal{I}_M^L$ with cardinality at least
	$$ |\mathcal{C}| \geq \frac{2^{ML_M}}{\binom{M}{t}^2 (B_{\epsilon_2}(L_M))^{2t} (t!^2 + \frac{t}{M-t}(B_{\epsilon_1}(\log M))^{2t} ) }. $$
	Therefore, for fixed $t,\epsilon_1,\epsilon_2$ and $M\rightarrow \infty$, there exists a $(t,\epsilon_1,\epsilon_2)$-indexed-set code $\mathcal{C}\subseteq \mathcal{I}_M^L$ with redundancy at most
	$$ r(\mathcal{C}) \leq 2t \log M + 2t \epsilon_2 \log L_M - 2t\log \epsilon_2! + o(1).$$
\end{thm}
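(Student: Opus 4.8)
The strategy is the standard GV sphere-covering argument: I would construct a code greedily by repeatedly picking a codeword $\mathcal{S}$ and deleting from the candidate pool every indexed set $\tilde{\mathcal{S}}$ whose error ball intersects $B(\mathcal{S})$, i.e.\ every $\tilde{\mathcal{S}} \in V(\mathcal{S})$. This yields a code of size at least $|\mathcal{I}_M^L| / \max_{\mathcal{S}} |V(\mathcal{S})| = 2^{ML_M} / \max_{\mathcal{S}} |V(\mathcal{S})|$, so the entire problem reduces to upper-bounding $|V(\mathcal{S})|$ by the denominator appearing in the theorem. First I would fix $\mathcal{S}$ and count the pairs $(\mathcal{S}, \tilde{\mathcal{S}})$ for which there is a common output $\mathcal{S}'' \in B(\mathcal{S}) \cap B(\tilde{\mathcal{S}})$.

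The core counting idea is that a confusable $\tilde{\mathcal{S}}$ is obtained from $\mathcal{S}$ by a chain of two channel actions: some common received set $\mathcal{S}''$ lies in $B(\mathcal{S})$, and $\tilde{\mathcal{S}}$ must itself map to that same $\mathcal{S}''$. I would therefore bound $|V(\mathcal{S})| \leq \sum_{\mathcal{S}'' \in B(\mathcal{S})} |\{\tilde{\mathcal{S}} : \mathcal{S}'' \in B(\tilde{\mathcal{S}})\}|$, which informally gives a product of two single-channel ball sizes. Each channel action touches at most $t$ of the $M$ strands, choosing the erroneous rows in $\binom{M}{t}$ ways and perturbing each chosen strand's data part within a Hamming ball of size at most $B_{\epsilon_2}(L_M)$ and its index within a ball of size at most $B_{\epsilon_1}(\log M)$; this accounts for the factors $\binom{M}{t}^2$ and $(B_{\epsilon_2}(L_M))^{2t}$. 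The subtle factor is the term $t!^2 + \frac{t}{M-t}(B_{\epsilon_1}(\log M))^{2t}$, which handles the ordering ambiguity. When the index errors are such that the two error patterns permute the $t$ affected strands among themselves, one overcounts by a permutation factor, giving the $t!^2$ contribution; the second summand accounts for the comparatively rare event that an index error pushes an affected strand onto a \emph{new} index value, where the extra $B_{\epsilon_1}(\log M)^{2t}$ index-ball factor enters but is damped by the probability-like ratio $\frac{t}{M-t}$ of such a collision occurring.

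The asymptotic redundancy bound then follows by taking logarithms: from $r(\mathcal{C}) = ML_M - \log|\mathcal{C}| \leq \log \max_{\mathcal{S}}|V(\mathcal{S})|$, I would apply $\log\binom{M}{t}^2 = 2t\log M - 2t\log t! + o(1)$ and $\log (B_{\epsilon_2}(L_M))^{2t} = 2t\epsilon_2\log L_M + o(1)$ (since $B_{\epsilon_2}(L_M) = \Theta(L_M^{\epsilon_2})$ for fixed $\epsilon_2$), while for fixed $t,\epsilon_1$ the bracketed ordering term is bounded by a constant and the $\frac{t}{M-t}$ factor kills the second summand as $M\to\infty$, so it contributes only $o(1)$. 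Collecting the surviving terms gives $2t\log M + 2t\epsilon_2\log L_M - 2t\log\epsilon_2! + o(1)$, noting that $-2t\log t!$ from the binomial merges with the $\log t!^2 = 2t\log t!$ from the ordering term and cancels, leaving the data-part factorial $\epsilon_2!$ as the only surviving refinement. The main obstacle I expect is making the ordering count in $|V(\mathcal{S})|$ rigorous: I must carefully separate the case where index errors merely permute the positions of the $t$ affected strands (no new index values appear) from the case where an erroneous index coincides with a position outside $\mathcal{F}$, and argue that the latter is sufficiently rare to be absorbed into the $\frac{t}{M-t}(B_{\epsilon_1}(\log M))^{2t}$ term rather than inflating the dominant $t!^2$ factor.
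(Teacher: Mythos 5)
Your plan follows the same route as the paper's proof: the greedy selection giving $|\mathcal{C}|\cdot\max_{\mathcal{S}}|V(\mathcal{S})|\geq|\mathcal{I}_M^L|$, the bound on $|V(\mathcal{S})|$ as a product of two channel-ball counts, and the split into exactly the two cases you describe. The paper formalizes the split via $B_I(\mathcal{S})=B(\mathcal{S})\cap\mathcal{I}_M^L$ versus $B_N(\mathcal{S})=B(\mathcal{S})\setminus B_I(\mathcal{S})$: when the common output $\mathcal{S}'$ is still an indexed set, the index errors can only permute the $t$ affected indices among themselves, giving at most $t!$ index-error patterns per channel action and hence the term $\binom{M}{t}^2 t!^2 (B_{\epsilon_2}(L_M))^{2t}$. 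Your asymptotic bookkeeping also matches, modulo two harmless slips: $\log\binom{M}{t}^2 = 2t\log M - 2\log t! + o(1)$ and $\log t!^2 = 2\log t!$ (not $-2t\log t!$ and $2t\log t!$); the terms still cancel, and the surviving $-2t\log\epsilon_2!$ comes from $B_{\epsilon_2}(L_M)=L_M^{\epsilon_2}/\epsilon_2!\,(1+o(1))$, i.e.\ from the constant you suppressed when writing $B_{\epsilon_2}(L_M)=\Theta(L_M^{\epsilon_2})$.

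The gap is precisely the step you flag as your main obstacle, and the way you propose to close it --- arguing that the second case is ``sufficiently rare,'' with $\frac{t}{M-t}$ as a ``probability-like ratio of such a collision occurring'' --- is not how the factor arises and cannot work as stated: this is a worst-case covering argument with no probability measure on error patterns, so rarity of an event gives no reduction in a count that must hold for every $\mathcal{S}$. The paper's mechanism is a deterministic constraint on the \emph{second} channel action. If $\mathcal{S}'\in B_N(\mathcal{S})$, then $t_N(\mathcal{S}')\geq 1$ indices are absent from $\mathcal{S}'$. Any indexed set $\tilde{\mathcal{S}}$ with $\mathcal{S}'\in B(\tilde{\mathcal{S}})$ contains a strand with each missing index, and if such a strand were not corrupted it would appear unchanged in $\mathcal{S}'$; hence every missing index must lie in the error support $\tilde{\mathcal{F}}$ of $\tilde{\mathcal{S}}$. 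This forces part of $\tilde{\mathcal{F}}$, so the number of choices for $\tilde{\mathcal{F}}$ drops from $\binom{M}{t}$ to at most $\binom{M}{t-t_N(\mathcal{S}')}\leq\binom{M}{t-1}=\binom{M}{t}\cdot\frac{t}{M-t+1}$, and this ratio (bounded by $\frac{t}{M-t}$) is the damping factor multiplying the unrestricted count $\binom{M}{t}(B_{\epsilon_1}(\log M))^{2t}(B_{\epsilon_2}(L_M))^{2t}$. Note also that the case split is not an optional refinement: the naive product bound you write down first, $\binom{M}{t}^2(B_{\epsilon_1}(\log M))^{2t}(B_{\epsilon_2}(L_M))^{2t}$, leaves a $2t\log B_{\epsilon_1}(\log M)=\Theta(\log\log M)$ term in the redundancy, which is not $o(1)$, so the stated bound is unreachable without eliminating the $B_{\epsilon_1}$ factors from the dominant (indexed-set-output) case.
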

\begin{proof}
	Based on an iterative procedure, it can be shown that there exists a $(t,\epsilon_1,\epsilon_2)$-indexed-set code $\mathcal{C} \subseteq \mathcal{I}_M^L$ with $|\mathcal{C}| \cdot \max_{\mathcal{S} \in \mathcal{I}_M^L}|V(\mathcal{S})|  \geq |\mathcal{I}_M^L|$. Bounding $|V(\mathcal{S})|$ from above for all $\mathcal{S}\in\mathcal{I}_M^L$ will be the main task in the following. Let $B_I(\mathcal{S}) \triangleq B(\mathcal{S}) \cap \mathcal{I}_M^L$ be the set of erroneous sets which are indexed sets and $B_N(\mathcal{S}) \triangleq B(\mathcal{S}) \setminus B_I(\mathcal{S})$. Further distinguish between $V_I(\mathcal{S}) \triangleq \{ \tilde{\mathcal{S}}\in \mathcal{I}_M^L: B_I(\mathcal{S}) \cap B(\tilde{\mathcal{S}}) \neq \emptyset \}$ and $V_N(\mathcal{S}) \triangleq V(\mathcal{S}) \setminus V_I(\mathcal{S})$ and note that $V(\mathcal{S}) = V_I(\mathcal{S}) \cup V_N(\mathcal{S})$. We first count $|V_I(\mathcal{S})|$. To begin with, $|B_I(\mathcal{S})|  \leq  \binom{M}{t}t! (B_{\epsilon_2}(L_M))^t$, as there are $\binom{M}{t}$ ways to choose the erroneous sequences $\mathcal{F}$. For one fixed $\mathcal{F}$, there are at most $t!$ error patterns for the errors in the indices $\ve{e}_{f_1}^{(1)}, \dots, \ve{e}_{f_t}^{(1)}$ that yield indexed sets, as any permutation of erroneous sequences is potentially possible. For each such choice there are again at most $(B_{\epsilon_2}(L_M))^t$ ways to distribute the errors in the data fields of the $t$ erroneous sequences. From each $\mathcal{S}' \in B_I(\mathcal{S})$, there are again at most $|B_I(\mathcal{S}')|$ ways to arrive at a valid set $\tilde{\mathcal{S}} \in \mathcal{I}_M^L$ and thus $|V_I(\mathcal{S})| \leq |B_I(\mathcal{S})|^2$. Next we count $|V_N(\mathcal{S})|$. The number of elements in the error ball is at most $|B_N(\mathcal{S})| \leq \binom{M}{t} B_{\epsilon_1}(\log M) B_{\epsilon_2}(L_M)$, as this is the maximum number of error patterns. Let $\mathcal{S}' \in B_N(\mathcal{S})$ and denote by $t_N(\mathcal{S}')$ the number of indices that are not present in $\mathcal{S}'$. Then the number of sets $\tilde{\mathcal{S}} \in \mathcal{I}_M^L$ with $\mathcal{S}' \in B(\tilde{\mathcal{S}})$ is at most $(B_{\epsilon_1}(\log M))^t  \binom{M}{t-t_N(\mathcal{S}')} (B_{\epsilon_2}(L_M))^t$, as $t_N(\mathcal{S}')$ sequences have to be distorted in a way such that their indices match the missing indices. And thus, there are only $B_{\epsilon_1}(\log M)$ options per missing index in $\mathcal{S}'$. The remaining erroneous sequences can be chosen arbitrarily. Using $t_N(\mathcal{S}') \geq 1$ for all $\mathcal{S}' \in B_N(\mathcal{S})$ yields the theorem.
\end{proof}
\section{Conclusion}
\vspace{-.05cm}
In this paper, we have discussed codes, where each codeword is an indexed set of several vectors. The proposed construction significantly improves the redundancy $2tL$ from \cite[Con. 1]{LSWY18} to $4t\log M + 2t \epsilon_2 \log L_M$, and approaches the sphere-packing bound $t\log M + t\epsilon_2 \log L_M$ up to a factor of $4$ and a factor of $2$ for $t=1$. Further, our results surprisingly indicate that errors within the index of sequences seem to be less harmful than errors in the data part of sequences. This is in sharp contrast to current technologies that often rely on extra codes, which only protect the index of sequences in order to guarantee correct ordering of sequences.

\vspace{-.08cm}

\bibliography{IEEEabrv,ref}

\begin{thebibliography}{10}
\providecommand{\url}[1]{#1}
\csname url@samestyle\endcsname
\providecommand{\newblock}{\relax}
\providecommand{\bibinfo}[2]{#2}
\providecommand{\BIBentrySTDinterwordspacing}{\spaceskip=0pt\relax}
\providecommand{\BIBentryALTinterwordstretchfactor}{4}
\providecommand{\BIBentryALTinterwordspacing}{\spaceskip=\fontdimen2\font plus
\BIBentryALTinterwordstretchfactor\fontdimen3\font minus
  \fontdimen4\font\relax}
\providecommand{\BIBforeignlanguage}[2]{{%
\expandafter\ifx\csname l@#1\endcsname\relax
\typeout{** WARNING: IEEEtranS.bst: No hyphenation pattern has been}%
\typeout{** loaded for the language `#1'. Using the pattern for}%
\typeout{** the default language instead.}%
\else
\language=\csname l@#1\endcsname
\fi
#2}}
\providecommand{\BIBdecl}{\relax}
\BIBdecl

\bibitem{Bau95}
E.~B. Baum, ``{Building an associative memory vastly larger than the brain},''
  \emph{Science}, vol. 268, no. 5210, pp. 583--585, 1995.

\bibitem{BGHCTIPC16}
M.~Blawat \emph{et~al.}, ``{Forward error correction for DNA data storage},''
  in \emph{Int. Conf. Computational Science}, San Diego, Jun. 2016, pp.
  1011--1022.

\bibitem{CKW19}
\BIBentryALTinterwordspacing
Y.~M. Chee, H.~M. Kiah, and H.~Wei, ``Efficient and explicit balanced primer
  codes,'' 2019. [Online]. Available: \url{http://arxiv.org/abs/1901.01023}
\BIBentrySTDinterwordspacing

\bibitem{CGK12}
G.~M. Church, Y.~Gao, and S.~Kosuri, ``{Next-generation digital information
  storage in DNA},'' \emph{Science}, no. 6102, pp. 1628--1628, Sep. 2012.

\bibitem{EZ17}
Y.~Erlich and D.~Zielinski, ``{DNA fountain enables a robust and efficient
  storage architecture},'' \emph{Science}, no. 6328, pp. 950--954, Mar. 2017.

\bibitem{GBCDLSB13}
N.~Goldman \emph{et~al.}, ``{Towards practical, high-capacity, low-maintenance
  information storage in synthesized DNA},'' \emph{Nature}, no. 7435, pp.
  77--80, Jan. 2013.

\bibitem{GHPPS15}
R.~N. Grass, R.~Heckel, M.~Puddu, D.~Paunescu, and W.~J. Stark, ``{Robust
  chemical preservation of digital information on DNA in silica with
  error-correcting codes},'' \emph{Angewandte Chemie Int. Edition}, no.~8, pp.
  2552--2555, Feb. 2015.

\bibitem{HMG18}
\BIBentryALTinterwordspacing
R.~Heckel, G.~Mikutis, and R.~Grass, ``A characterization of the {DNA} data
  storage channel,'' 2018. [Online]. Available:
  \url{http://arxiv.org/abs/1803.03322}
\BIBentrySTDinterwordspacing

\bibitem{KPM16}
H.~M. Kiah, G.~J. Puleo, and O.~Milenkovic, ``{Codes for DNA sequence
  profiles},'' \emph{{IEEE} Trans. Inf. Theory}, vol.~62, no.~6, pp.
  3125--3146, Jun. 2016.

\bibitem{KT18}
M.~Kova\v{c}evi\'c and V.~Y.~F. Tan, ``Codes in the space of multisets --
  coding for permutation channels with impairments,'' \emph{{IEEE} Trans. Inf.
  Theory}, no.~7, pp. 5156--5169, Jul. 2018.

\bibitem{LSWY18}
\BIBentryALTinterwordspacing
A.~Lenz, P.~H. Siegel, A.~Wachter-Zeh, and E.~Yaakobi, ``Coding over sets for
  {DNA} storage,'' 2018, submitted to \emph{IEEE Trans. Inform. Theory}.
  [Online]. Available: \url{https://arxiv.org/abs/1812.02936}
\BIBentrySTDinterwordspacing

\bibitem{Oetal17}
L.~Organick \emph{et~al.}, ``{Random access in large-scale DNA data storage},''
  \emph{Nature}, pp. 242--248, Mar. 2018.

\bibitem{Rot06}
R.~M. Roth, \emph{Introduction to Coding Theory}.\hskip 1em plus 0.5em minus
  0.4em\relax New York: Cambridge University Press, 2006.

\bibitem{SYLW19}
T.~Shinkar, E.~Yaakobi, A.~Lenz, and A.~Wachter-Zeh, ``Clustering-correcting
  codes,'' submitted to ISIT 2019.

\bibitem{SRB18}
\BIBentryALTinterwordspacing
J.~Sima, N.~Raviv, and J.~Bruck, ``On coding over sliced information,'' 2018.
  [Online]. Available: \url{http://arxiv.org/abs/1809.02716}
\BIBentrySTDinterwordspacing

\bibitem{SC2018}
\BIBentryALTinterwordspacing
W.~Song and K.~Cai, ``Sequence-subset distance and coding for error control in
  {DNA}-based data storage,'' 2018. [Online]. Available:
  \url{http://arxiv.org/abs/1809.05821}
\BIBentrySTDinterwordspacing

\bibitem{Wol06}
J.~K. Wolf, ``An introduction to tensor product codes and applications to
  digital storage systems,'' in \emph{IEEE Inform. Theory Workshop}, Chengdu,
  China, Oct. 2006, pp. 6--10.

\bibitem{YKGM18}
S.~M. H.~T. Yazdi, H.~M. Kiah, R.~Gabrys, and O.~Milenkovic, ``Mutually
  uncorrelated primes for {DNA}-based data storage,'' \emph{{IEEE} Trans. Inf.
  Theory}, vol.~64, no.~9, pp. 6283--6296, Sep. 2018.

\bibitem{YKGMZM15}
S.~M. H.~T. Yazdi, H.~M. Kiah, E.~Garcia-ruiz, J.~Ma, H.~Zhao, and
  O.~Milenkovic, ``{DNA}-based storage: {T}rends and methods,'' \emph{{IEEE}
  Trans. Mol. Biol. Multi-Scale Commun.}, vol.~1, no.~3, pp. 230--248, Sep.
  2015.

\bibitem{YTMZM15}
S.~M. H.~T. Yazdi, Y.~Yuan, J.~Ma, H.~Zhao, and O.~Milenkovic, ``{A rewritable,
  random-access DNA-based storage system},'' \emph{Nature Scientific Reports},
  no. 14138, Aug. 2015.

\end{thebibliography}
\bibliographystyle{IEEEtranS}

\end{document}